\documentclass{article}
 
\usepackage{authblk}
\usepackage{amsmath,amsthm,amssymb}
\usepackage{fullpage}
\usepackage{xcolor}
\usepackage{enumitem}
\usepackage[hidelinks]{hyperref}
\usepackage{textcomp}
\usepackage{url}
\usepackage{natbib}

\newtheorem{theorem}{Theorem}[section]

\newtheorem{lemma}[theorem]{Lemma}

\newtheorem{remark}[theorem]{Remark}
\newtheorem{definition}{Definition}[section]

\begin{document}

\title{Procurement Auctions with Best and Final Offers\thanks{Vasilis Gkatzelis was on sabbatical at Google Research in NYC when this work took place.}}

\author[1]{Vasilis Gkatzelis\thanks{gkatz@drexel.edu}} 
\author[2]{Randolph Preston McAfee\thanks{randolph@mcafee.cc}}
\author[2]{Renato Paes Leme\thanks{renatoppl@google.com}}

\affil[1]{Drexel University, Computer Science}
\affil[2]{Google Research}

\date{}
\maketitle

\begin{abstract}
We study sequential procurement auctions where the sellers are provided with a ``best and final offer'' (BAFO) strategy. This strategy allows each seller $i$ to effectively ``freeze'' their price while remaining active in the auction, and it signals to the buyer, as well as all other sellers, that seller $i$ would reject any price lower than that. This is in contrast to prior work, e.g., on descending auctions, where the options provided to each seller are to either accept a price reduction or reject it and drop out. As a result, the auctions that we consider induce different extensive form games and our goal is to study the subgame perfect equilibria of these games. We focus on settings involving multiple sellers who have full information regarding each other's cost (i.e., the minimum price that they can accept) and a single buyer (the auctioneer) who has no information regarding these costs. Our main result shows that the auctions enhanced with the BAFO strategy can guarantee efficiency in every subgame perfect equilibrium, even if the buyer's valuation function is an arbitrary monotone function. This is in contrast to prior work which required that the buyer's valuation satisfies restrictive properties, like gross substitutes, to achieve efficiency. We then also briefly analyze the seller's cost in the subgame perfect equilibria of these auctions and we show that even if the auctions all return the same outcome, the cost that they induce for the buyer can vary significantly. 
\end{abstract}

\section{Introduction}
Many situations call for purchasing goods or services that need to ``fit together.''  A computer drive must fit in the drive bay, an alternator in the engine bay, and a piece of accounting software may need to be compatible with inputs from a web store.  Indeed, a company setting up a web store needs a variety of software products that work together: a front end for the buyer, an accounting system, an inventory system, a process for product delivery, customer service software, and so on.  The creator of a web store will want to purchase various applications that are compatible and, as a result, two accounting systems may be good substitutes, but an accounting system will be a complement to an inventory system.  That is, an array of software applications will contain both substitutes and complements, in a potentially very complex relationship.

Procurement problems featuring both substitutes and complements are probably more common than just substitutes.  Virtually any assembly problem will force complements via specifications and compatibility.  While individual components are substitutes for each other, components that fit together are necessarily complementary.  This logic applies equally well to software as well as hardware.  Similarly, information -- for advertising, for financial transactions, for investment, or for training models -- from various sources will have both complements and substitutes among the various suppliers.  How should a buyer purchase in such a situation?

The default solution for pricing an asset when there is substantial uncertainty about its value is to hold an auction, and this approach has been successful for a variety of such assets, ranging from electromagnetic spectrum licenses to antiques and art. 
We approach this problem from the buyer's perspective, aiming to design procurement (or reverse) auctions that they can use to determine which subset of these goods or services they should acquire, and at what price (e.g., the US government uses online reverse auctions to procure goods and services in a competitive environment~\cite{ProcurementGov}). One difficulty in applying standard procurement auction tools to the applications discussed above is that known techniques from the literature focus on the case where  the goods being sold are substitutable, e.g., captured by the assumption that the buyer's valuation function over subsets of the goods is submodular, or satisfies the more restrictive property of gross substitutes. For example, the brilliant paper of \citet{KC82}, extended by \citet{GS99}, shows that whenever the gross substitutes property is satisfied the auctions perform well. Moreover, there are examples that illustrate bad equilibria can arise whenever the gross substitutes property is violated, suggesting that this property may even be necessary.

Our main contribution in this paper is to design auctions that can handle the more complicated value structure that arises in many important applications. We revisit the design of procurement auctions and augment the bidder with a new type of strategy which we call ``best and final offer'' (BAFO). This strategy, which is often used in practice but has not received enough attention in auction theory, effectively allows the bidders to freeze their price, while remaining active. In doing so, they risk losing the auction but, as we show, if they strategically choose when to use it, the design recovers the efficiency even in settings with highly complicated valuation functions that exhibit both substitutes and complements.

\subsection{Our Results}

We focus on settings where a buyer wants to acquire the goods or services of multiple sellers. The buyer's preferences are characterized by a (potentially highly complicated) combinatorial valuation function which assigns a numerical value to each subset of goods or services, exhibiting both substitutes and complements. On the other hand, each seller has a cost for selling, i.e., a minimum payment that they would require in order to sell. Our goal is to design procurement auctions that identify which sellers to buy from and at what price, leading to efficient outcomes.
To that end, we analyze two classes of sequential procurement auctions that augment the bidder with a new type of strategy: the ``best and final offer'' (BAFO). 

In the first class of auctions, which we refer to as ``Name-Your-BAFO'' auctions, the buyer interacts with each seller only once: the sellers are approached in some order and each seller $i$ is asked to directly report the payment they request for their goods or services as a bid $b_i \in \mathbb{N}$. Before choosing this bid, each seller can observe all previously reported bids. Once every seller has reported their bid, the buyer determines which subset of sellers to purchase from, aiming to maximize his utility (i.e., his value for the services minus their cost). If the buyer ends up purchasing from some seller $i$, then the seller's payment is equal to their bid $b_i$. Therefore, when the seller reports their bid, they are essentially offering a take-it-or-leave-it price to the buyer and, since they do not interact after that, this bid is their best and final offer.

The second class of auctions are descending auctions and may interact multiple times with each seller, providing them with a strategy space that is quite different. These auctions initially assign a high price to each seller and then take place over a sequence of rounds: in each round the auction computes a tentative allocation based on the current prices and approaches one of the sellers that remain active and is not in the tentative allocation. This seller is asked if they would be willing to accept a slightly reduced price, and they can choose to either accept that reduction or to permanently ``freeze'' their price. Crucially, a seller that freezes their price does not drop out of the auction; they instead use the freezing strategy to signal to the buyer and the other sellers that they are not willing to accept any lower price, and they commit to this signal. Once again, the freezing strategy corresponds to a best and final offer from the seller, who then just waits until the end of the auction for the buyer to determine the set of winners. Once every seller is either frozen or in the tentative allocation, the auction terminates and the buyer purchases from the set of sellers that maximizes their utility at the final prices. The crucial difference between this class of auctions and other descending auctions (e.g., descending clock auctions) is that rejecting a price decrease does not automatically cause a seller to drop out of the auction.

\paragraph{Analysis of the auction efficiency and cost.} Since these auctions are sequential (i.e., they take place over a sequence of rounds), we model them as extensive form games and we analyze the subgame perfect equilibria (SPE) of these games. We approach this problem from the perspective of the buyer (or the auctioneer) who we assume has no information regarding the sellers' true costs (i.e., the minimum price that they would require in order to provide their goods or services) and uses the auction as a way to determine which subset of sellers to buy from and at what price. On the other hand, we assume that the sellers know each other's costs and are competing with each other. The buyer's valuation function (i.e., what subset of services they would buy at a given set of prices) is public information.

Our main result shows that any sequential procurement auction from the aforementioned two classes is guaranteed to reach efficient outcomes in every subgame perfect equilibrium. This result holds for a very general class of buyer valuation functions that can exhibit both complementarities and substitutes, which is in stark contrast to classic prior work that requires the restrictive gross substitutes property in order to reach efficiency. The key differences that allow us to achieve this positive result are i) the fact that our auctions provide the sellers with the ability to freeze their price without dropping out (the BAFO strategy), combined with ii) their sequential implementation, which allows the sellers to signal to each other using the BAFO strategy. Specifically, once some seller has finalized their price using the BAFO strategy, all other active sellers can observe this fact and have to choose their optimal strategy conditioned on this fact. This is something that each seller anticipates when choosing the BAFO strategy, and this signaling between the sellers avoids inefficient outcomes.

Finally, apart from the allocations returned by these auctions we also analyze the price vectors that they give rise to. We show that although the allocation remains efficient irrespective of which auction is used, the auction choice can have a very big impact on the total cost that the buyer needs to pay, i.e., the sum of the prices that they need to pay in the efficient solution.

\subsection{Related Work}
Our work adds a new twist to the large body of literature on ascending and descending price auctions, which have a long history in economic theory (e.g., \citet{KC82}, \citet{demange1986multi}; \citet{GS99}; \citet{parkes2000iterative}; \citet{ausubel2002ascending}; \citet{bikhchandani2002package}, \citet{ausubel2004efficient, ausubel2006efficient}; \citet{perry2005efficient}; \citet{de2007ascending}). Such designs have been immensely successful both in theory and in practice and variants of these designs have been used in major spectrum auctions worldwide (e.g., \citet{ausubel2014market,MS20}) as well as in auctions for electricity, gas, and
emission allowances in Europe (\citet{cramton2012discrete}), among many other applications. There are several practical advantages to this auction format such as minimizing the information the buyer learns about the sellers and the simplicity in bidding (see \citet{perry2005efficient} for a comprehensive discussion).

Prior work has focused on settings in which there is a degree of substitution between the items involved: \citet{KC82,GS99} focus on gross substitutes, \citet{demange1986multi} on unit-demands which is a special case of substitutes, \citet{ausubel2004efficient} on homogenous goods with decreasing marginals, and \citet{de2007ascending} assumes a submodularity condition. Two examples that we are aware of with ascending and descending price procedures that can handle complements are: \citet{sun2006equilibria} and \citet{baranov2017efficient} who do so by studying restricted forms of complementarity with an underlying substitutable structure. This is a structure that  \citet{hatfield2016hidden} refer to as ``hidden substitutes''. Some of these auctions are known to lose efficiency even in seemingly very simple settings beyond substitutes \cite{DGR17}.

Just like the descending auctions that we analyze in this paper, the descending clock auctions in the papers cited above also assign a personalized price to each seller, which then weakly decreases over time. However, a crucial difference is that clock auctions do not provide the bidders with the option of freezing their price. If a seller is not willing to accept a price decrease, then they are forced to drop out of the auction and are, therefore, excluded from the final solution, even if their price before dropping out turns out to be competitive in hindsight.

Our results augment the descending auction format with the ability for participants to commit to a BAFO signal -- instead of dropping out of the auction, sellers can remain active but can no longer revise their price. This new feature allows us to extend the efficiency guarantees to any combinatorial valuation. The notion of a BAFO has been used in the implementation of optimal strategies in certain bargaining games (e.g. Samuelson \cite{samuelson1984bargaining}) but to the best of our knowledge has not been applied to iterative combinatorial auctions. 
One exception is some ongoing work on budget-feasible mechanism design by~\citet{AranyakBAFO}.

Another important difference between our work and the previous literature on iterative auctions is the equilibrium concept. Traditional descending price auctions satisfy the stronger notion of strategyproofness. Instead we guarantee efficiency under any subgame perfect equilibrium (SPE) of the extensive form game induced by the auction mechanism. Characterizing SPE tends to be difficult except for very structured games \citet{leme2012curse}. The SPE of simple auction formats has been studied under submodularity and matroid conditions \citet{leme2012sequential}.

\section{Preliminaries}
We consider settings with a single buyer who wants to procure goods or services from a set $N=\{1,\hdots, n\}$ of $n$ sellers. Each seller $i\in N$ has a cost $c_i\in \mathbb{N}$ for selling\footnote{Throughout the paper, we assume that costs and prices are expressed as multiples of some small enough denomination, e.g., \$1 or \textcent 1.} and the buyer's value for buying each subset of goods or services is captured by a combinatorial valuation function $v(\cdot): 2^N \to \mathbb{R}$. Given a vector of prices $p \in \mathbb{R}^n$, one for each seller, the utility of the buyer for a subset of sellers $Q$ is equal to his value for that subset, minus the total cost, i.e., $v(Q)-\sum_{i\in Q} p_i$. We use $D(v;p)$ to denote the buyer's preferred subset of sellers, i.e., the one that maximizes the buyer's utility:
$$D(v;p) := \text{argmax}_{Q \subseteq N} [v(Q) - \textstyle\sum_{i \in Q} p_i].$$

\paragraph{Types of valuation functions.}
A valuation function $v(\cdot):2^N \to \mathbb{R}$ is \emph{submodular} if for any two sets $Q$ and $R$ such that $Q\subset R \subset N$, and any $i\notin Q\cup R$, we have $v(Q\cup \{i\})-v(Q) \geq v(R\cup \{i\})-v(R)$. A valuation function satisfies the \emph{gross substitutes} property if an increase of the price for certain goods does not reduce the buyer's demand for goods whose price did not increase. Formally, $v(\cdot)$ satisfies \emph{gross substitutes} if for every pair of price vectors $p \leq p'$, if $Q \in D(v;p)$ at prices $p$, then there exists $R \in D(v;p')$ at prices $p'$ such that $Q \cap \{i\in N: p_i= p'_i\} \subseteq R$. A valuation function is \emph{anonymous} if the value $v(Q)$ of every set $Q\subseteq N$ depends only on the size $|Q|$ of this set (e.g., this type of valuation would arise if every seller is offering the same type of good or service).

\paragraph{Procurement auctions.}
Our goal is to design procurement auctions, which are mechanisms that interact with the sellers and decide which subset  $W\subseteq N$ of sellers the buyer will purchase from (the ``winners''), and what payment each winner $i\in W$ should receive. The output of the auction is the set $W$ (or, equivalently, a vector $x\in \{0, 1\}^n$, such that $x_i=1$ if $i\in W$ and $x_i =0$ if $i\notin W$) and a price $p_i$ for each $i\in W$. The utility of each seller $i\in N$ is 
\[u_i(x, p) =(p_i - c_i)  x_i.\] 
If $p_i = c_i$ we assume that agents prefer winning at a price equal to their cost to losing (both yield the same utility of zero).

Given a set of winners $W$ and a price vector $p$, the social welfare is equal to the sum of the utilities of the buyer and the sellers, i.e., 
\[\left(v(W)-\sum_{i\in W} p_i \right)+ \sum_{i\in W} (p_i - c_i) ~=~ v(W)-\sum_{i\in W} c_i.\] 
We use $W^*\in \arg\max_{W\subseteq N}v(W)-\sum_{i\in W} c_i$ to denote an \emph{efficient} solution, i.e., a set of winners $W$ that maximizes the social welfare, and we use $x^*$ to denote the corresponding allocation vector. The total cost of the buyer given a price vector $p$ and a set of winners $W$ is $\sum_{i\in W} p_i$.

\paragraph{Winner selection and tie-breaking.} Once the price vector $p$ has been finalized, our proposed auctions choose a set of winners $W$ from the demand set $D(v;p)$. If there are multiple such sets in $D(v;p)$, we tie break using a winner selection rule 
which maps each price vector $p$ to a set of winners $W(p) \in D(v;p)$ and satisfies the well-known Independence of Irrelevant Alternatives (IIA) property. In our context, this implies that if $W^*$ is the winning set that we choose from $D(v;p)$ and, $D(v,p')$ is another demand set such that $D(v,p')\subseteq D(v,p)$ and $W^*\in D(v,p')$, then we choose $W^*$ from $D(v,p')$ as well. Most natural ways of breaking ties satisfy IIA (e.g., any rule that defines some arbitrary total order over subsets $Q \subseteq N$ and then  chooses the first $Q\in D(v;p)$ according to this total order).

\newcommand{\T}{\mathcal{T}}
\newcommand{\R}{\mathbb{R}}
\renewcommand{\S}{\mathcal{S}}

\paragraph{Extensive form games and subgame perfect equilibrium.} Our auctions are sequential and give rise to extensive form games between the sellers. 
An extensive form game with a set $N$ of players is represented by a rooted tree $\T$  of finite depth with node set $\S$. Given a node $s \in \S$, we use $\T(s)\subseteq \S$ to denote the child-nodes of $s$ in $\T$. If $\T(s) = \emptyset$, we say that $s$ is a terminal node; otherwise, we say it is an internal node. We use $\S_{\textsf{term}}$ to denote the set of terminal nodes and $\S_{\textsf{int}}=\S \setminus \S_{\textsf{term}}$ to denote the set of internal nodes.

Each internal node $s$ is associated with a player $i(s)$ (the player whose turn it is to make a ``move'' at that point in the game) using a mapping $i:\S_{\textsf{int}} \rightarrow N$. If we let $\S_i\subseteq \S_{\textsf{int}}$ denote the set of internal nodes associated with each player $i$, then a strategy for player $i$ is a mapping $a_i: \S_i \rightarrow \S$ such that $a_i(s) \in \T(s)$. Each terminal node is associated with a payoff $\pi_j: \S_{\textsf{term}} \rightarrow \R$ for each player $j\in N$. Given a profile of strategies $(a_1,\hdots, a_n)$, one for each player, we can also define payoff functions $\hat{\pi}_j$ over all nodes, $\S$, using the terminal node payoffs $\pi_j$ and backward induction, as follows: 
$$\hat{\pi}_j(s) = \pi_j(s) \text{ for } s \in \S_{\textsf{term}} \text{ and } \hat{\pi}_j(s) = \hat{\pi}_j(a_{i(s)}(s)) \text{ for } s \in \S_{\textsf{int}}.$$

A profile of strategies is a subgame perfect equilibrium (SPE) if for every player $i$, all nodes $s \in \S_i$ and all nodes $s' \in \T(s)$ we have:
$$ \hat{\pi}_i(a_{i}(s)) \geq \hat{\pi}_i(s'),$$
i.e., no player $i$ can unilaterally change their strategy at $s$ from $a_i(s)\in \T(s)$ to some other $s'\in \T(s)$ and increase their utility.

\section{Sequential Name-Your-BAFO auction}

As a warm up, we first propose and analyze the \emph{Name-Your-BAFO} auction. This auction interacts with each seller only once, asking them to report their best and final offer, so it is simpler to implement and analyze than the descending auction of Section~\ref{sec:descending}, but the latter allows for a gradual ``price discovery,'' which makes descending auctions attractive in practice. These differences notwithstanding, we show that they both always yield efficient outcomes.

The Name-Your-BAFO auction approaches the sellers in some (possibly adaptive) order and sequentially asks each seller $i\in N$ to submit a bid $b_i \in \mathbb{N}$ regarding the payment that they request for their good or service.  Before reporting their bid, each seller can observe the bids reported by all preceding sellers. Once every seller has reported their bid, the buyer then chooses a set of winners $W$ that maximizes their utility $v(W)-\sum_{i\in W}b_i$ (using a tie-breaking rule that satisfies IIA) and pays each winner $i\in W$ their bid, i.e., $p_i = b_i$. Since the auction can only choose a seller if it pays them their bid, this bid corresponds to the seller's best and final offer. 

The auction described above gives rise to an extensive form game that can be represented by a tree of depth $n+1$, where each node at level $\ell+1$ is indexed by a tuple $(b_1, \hdots, b_\ell) \in \mathbb{N}^\ell$, corresponding to the preceding bids. Note that the tree has infinite branching but finite depth. The payoffs of the sellers can be computed at the terminal nodes once all the bids $(b_1, \hdots, b_n)$ have been specified. In this section we only refer to this tree implicitly. In the more difficult proofs in the following session, we explicitly analyze the corresponding game tree. 

Before analyzing this auction, we consider the toy example of buying chopsticks in auction to develop some intuition regarding the important role of the sequential nature of our auctions.

\subsection{Buying Chopsticks in Auction}
To exhibit the issues that arise with complementarities and how sequential pricing can sidestep these issues, we consider the illustrative example of buying chopsticks using an auction. Consider an instance of an auction involving one seller with a fork and two sellers with one chopstick each. The value of the buyer for a fork is the same as the value for two chopsticks, say \$1; a single chopstick is worthless. This situation fails the gross substitutes property because the chopsticks are complements (raising the price of one chopstick will eventually lower the demand for the other chopstick). Assume that the chopstick sellers have a cost of \textcent10 each, and the fork seller has a cost of \textcent50, and consider a non-sequential version of the Name-Your-BAFO auction. In this auction, there is an equilibrium where the chopstick sellers each ask for \textcent95 and the buyer buys the fork for \$1. No seller can improve their position by a unilateral deviation; a chopstick seller cannot create a sale even if they were to lower their price to equal their cost, i.e., \textcent10.

Note, however, that the inefficient equilibrium falls apart if the prices are set sequentially and the sellers can observe these prices.  For example, suppose that the auction approaches the sellers in a sequence and the fork seller goes last. If the sum of the chopstick bids exceeds the buyer's value of \$1, the fork seller can ask for \$1 and win the auction. If the sum of the chopstick bids does not exceed \$1, but is more than the fork seller's cost of \textcent50, the fork seller can ask for the sum of the chopstick bids, perhaps minus a penny, and win the auction. Knowing this, the second chopstick seller will ensure that the sum of the prices does not exceed the fork's cost, if that is possible (i.e., if the price asked by the first seller is not more than \textcent40).  Knowing this, the first chopstick asks for exactly \textcent40, the second seller asks for \textcent10, and the fork seller asks for \textcent50.\footnote{This assumes that the tie-breaking favors the chopsticks.If this is not the case, then the first seller would just bid \textcent39 instead, ensuring that the cost of chopsticks is \textcent 49.} The other cases, where the fork seller is not last, are similar but the fork seller's bid may not necessarily equal their cost.  In all permutations, the buyer purchases chopsticks, which is the efficient outcome, though the total payment may exceed \textcent50.

\begin{remark}
As exhibited by the chopstick auction example above, in the subgame perfect equilibria of the Name-Your-BAFO auction the sellers' bids are not necessarily equal to their costs.
\end{remark}

\subsection{Efficiency of Name-Your-BAFO Auctions}

Our main result for this section shows that every subgame perfect equilibrium of a Name-Your-BAFO auction is efficient.

\begin{theorem}\label{thm:NYBAFO-efficient}   
For any combinatorial valuation $v$, the allocation induced by any subgame perfect equilibrium of a Name-Your-BAFO auction is efficient.
\end{theorem}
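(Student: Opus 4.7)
The plan is to prove, by backward induction on the nodes of the game tree from the leaves up, the following strengthening of the theorem: in any SPE, the continuation outcome from every node $s$ is ``effectively efficient'' at $s$. Here, for a node $s$, I let $B(s)$ denote the sellers who have already bid (with committed bids $b_i$), let $R(s)=N\setminus B(s)$ denote the rest, and define the \emph{effective cost vector} by $\tilde c_i(s)=b_i$ for $i\in B(s)$ and $\tilde c_i(s)=c_i$ otherwise. A set $W\subseteq N$ is \emph{effectively efficient} at $s$ if it lies in $\arg\max_{W'\subseteq N}[v(W')-\sum_{i\in W'}\tilde c_i(s)]$. The theorem follows by applying the invariant at the root, where $\tilde c=c$.

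The base case is immediate: at a terminal node all bids are locked in, so the buyer's IIA-consistent selection from $D(v;b)$ is by definition effectively efficient. For the inductive step at an internal node $s$ with active seller $i^*$, I would exploit that the effective welfare of a set $W$ at the child node $s(b_{i^*})$ is linear in $b_{i^*}$ with slope $-1$ when $i^*\in W$ and slope $0$ otherwise. This yields a threshold $\bar b_{i^*}$ equal to the difference between the best effective welfare achievable over sets containing $i^*$ (with $i^*$'s cost component removed) and the best effective welfare over sets excluding $i^*$. By the inductive hypothesis, the continuation winner set from $s(b_{i^*})$ is effectively efficient at $s(b_{i^*})$, and the collection of effectively efficient winners at $s(b_{i^*})$ consists of only sets containing $i^*$ when $b_{i^*}<\bar b_{i^*}$, only sets excluding $i^*$ when $b_{i^*}>\bar b_{i^*}$, and sets of both types when $b_{i^*}=\bar b_{i^*}$. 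Seller $i^*$'s best response is therefore to bid as close to $\bar b_{i^*}$ from below as possible whenever $\bar b_{i^*}\ge c_{i^*}$, and any value $\ge c_{i^*}$ otherwise. A direct calculation comparing effective welfares at $s(b_{i^*})$ and at $s$, which differ only in whether $i^*$'s cost component is $b_{i^*}$ or $c_{i^*}$, then shows that the resulting outcome is also effectively efficient at $s$: when the winner set contains $i^*$, reducing the cost from $b_{i^*}\ge c_{i^*}$ down to $c_{i^*}$ only widens the margin of this set over alternatives excluding $i^*$; when the winner set excludes $i^*$, the welfare gap of $b_{i^*}-\bar b_{i^*}\ge 0$ over sets containing $i^*$ at $s(b_{i^*})$ shrinks by exactly $b_{i^*}-c_{i^*}$ when we move to $s$, leaving a residual gap of $c_{i^*}-\bar b_{i^*}\ge 0$.

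The main obstacle is pinning down SPE behavior at the tie point $b_{i^*}=\bar b_{i^*}$, where the effectively efficient winners at $s(b_{i^*})$ include sets of both types and the continuation (through IIA tie-breaking) may select one that excludes $i^*$. The discreteness of the bid space ($b_{i^*}\in\mathbb{N}$) is what rescues the argument: when $\bar b_{i^*}>c_{i^*}$, seller $i^*$ can strictly undercut to $\bar b_{i^*}-1$, which forces the effectively efficient winners at the child node to contain only sets with $i^*$, guaranteeing inclusion regardless of how later play or tie-breaking resolve. The residual corner $\bar b_{i^*}=c_{i^*}$ with unfavorable tie-breaking leaves seller $i^*$ indifferent at utility $0$ across all bids $\ge c_{i^*}$, but in that situation the two categories of sets also tie in effective welfare at $s$, so any continuation outcome is automatically effectively efficient there. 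These cases together complete the inductive step and prove the theorem.
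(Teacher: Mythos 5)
Your proof is correct and rests on the same backbone as the paper's: backward induction over the game tree with the invariant that the SPE continuation from each node maximizes the buyer's utility against the vector charging committed sellers their bids and uncommitted ones their true costs (the paper's ``conditional price vector'' and ``conditional efficiency''). The difference is in how the inductive step is discharged. The paper first isolates a monotonicity lemma (Lemma~\ref{lem:winner-choice}): if winners' prices weakly fall and losers' weakly rise, the IIA tie-breaking rule selects the same set; the step then splits only on whether seller $k$ can win at some bid $\geq c_k$ and invokes the lemma. You instead compute the explicit threshold $\bar b_{i^*}$ and compare effective welfares directly, which has the advantage of exhibiting the equilibrium bids (at or just below the threshold) but obliges you to treat the tie point $b_{i^*}=\bar b_{i^*}$ by hand; the paper's slightly stronger invariant, which also records that $W(s)$ is the particular set the tie-breaking rule selects, absorbs that case. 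Your weaker invariant (membership in the argmax, without the tie-break clause) still closes the induction and suffices for efficiency at the root, since when the continuation can land on either side of the threshold the two categories of sets tie at $s$ as well. Two minor imprecisions, neither fatal: since $v$ is real-valued, $\bar b_{i^*}$ need not be an integer, so ``undercut to $\bar b_{i^*}-1$'' should read ``bid the largest integer strictly below $\bar b_{i^*}$''; and when $\bar b_{i^*}<c_{i^*}$ the set of best responses also includes losing bids strictly between $\bar b_{i^*}$ and $c_{i^*}$, though these still yield a continuation that is effectively efficient at $s$.
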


Before proving this theorem, we use the fact that all auctions in this paper choose an allocation that maximizes the buyer's utility (using a tie-breaking rule that satisfies IIA) to show that the outcome always satisfies the following property.
\begin{lemma}\label{lem:winner-choice}
If for some prices $p$ our auction chooses the winning set $W(p)$, and $p'$ are any prices such that $p'_i \leq p_i$ for $i \in W(p)$ (the prices of all the winning bidders are weakly lower) and $p'_i \geq p_i$ for $i \notin W(p)$ (the prices of all the losing bidders are weakly higher), then our auction chooses the same winning set at $p'$, i.e., $W(p') = W(p)$.

\end{lemma}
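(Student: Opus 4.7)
The plan is to establish two facts about the demand sets $D(v;p)$ and $D(v;p')$ and then invoke the IIA property of the tie-breaking rule. Specifically, I want to show (i) $W(p) \in D(v;p')$ and (ii) $D(v;p') \subseteq D(v;p)$. Once both hold, IIA applied to the selection of $W(p)$ from $D(v;p)$ gives $W(p') = W(p)$.

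For (i), fix an arbitrary $Q \subseteq N$ and compare the buyer's utility of $W(p)$ versus $Q$ at the two price vectors. A direct algebraic rearrangement yields
\[
[v(W(p)) - \textstyle\sum_{i \in W(p)} p'_i] - [v(Q) - \textstyle\sum_{i \in Q} p'_i] = [v(W(p)) - \textstyle\sum_{i \in W(p)} p_i] - [v(Q) - \textstyle\sum_{i \in Q} p_i] + \textstyle\sum_{i \in W(p) \setminus Q}(p_i - p'_i) - \textstyle\sum_{i \in Q \setminus W(p)}(p_i - p'_i).
\]
The first bracketed difference is non-negative because $W(p) \in D(v;p)$. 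For the remaining two sums, indices in $W(p) \setminus Q$ lie in $W(p)$, so the hypothesis $p'_i \leq p_i$ makes $p_i - p'_i \geq 0$; indices in $Q \setminus W(p)$ lie outside $W(p)$, so $p'_i \geq p_i$ makes $p_i - p'_i \leq 0$, and the sum is being subtracted. Both contribute non-negatively, so the whole expression is $\geq 0$, establishing that $W(p)$ maximizes utility at $p'$.

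For (ii), I take any $R \in D(v;p')$ and run the same identity with $Q = R$. Since $R$ achieves the maximum utility at $p'$ as well, the left-hand side equals zero. The three non-negative terms on the right must then each be zero, and in particular $v(W(p)) - \sum_{i \in W(p)} p_i = v(R) - \sum_{i \in R} p_i$, i.e., $R \in D(v;p)$. Hence $D(v;p') \subseteq D(v;p)$.

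The rest is immediate: since $W(p)$ is the element chosen by the tie-breaking rule from $D(v;p)$, the IIA condition stated in the preliminaries (if a demand set at some other price vector is a subset of $D(v;p)$ and still contains $W(p)$, then the rule selects $W(p)$ there too) yields $W(p') = W(p)$. The only even mildly subtle part is the sign bookkeeping in (i) and (ii); everything else is a direct appeal to optimality of $W(p)$ at $p$ followed by IIA.
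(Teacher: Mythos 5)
Your proof is correct and follows essentially the same route as the paper: show that $W(p)$ remains a utility maximizer at $p'$, show $D(v;p')\subseteq D(v;p)$ via tightness of the same inequality chain, and conclude with IIA. The only cosmetic difference is that you organize the comparison as an explicit identity over the symmetric difference $W(p)\setminus Q$ and $Q\setminus W(p)$, whereas the paper bounds the cost drop of an arbitrary set $T$ by the total drop $\Delta$ of the winners; your version of the inclusion $D(v;p')\subseteq D(v;p)$ is, if anything, spelled out slightly more explicitly.
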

\begin{proof}
Let $W^* = W(p)$ be the set of winners at prices $p$ and let $T \neq W^*$ be any other set of sellers. Since $W^*\in D(v;p)$, we have
\begin{equation}\label{ineq:opt_choice}
v(W^*) - \sum_{i \in W^*} p_i \geq v(T) - \sum_{i \in T} p_i.
\end{equation}
Under prices $p'$, the cost of $W^*$ decreases by $\Delta= \sum_{i\in W^*}p_i -p'_i$ and the cost of every other set $T$ decreases\footnote{In fact, it may even increase.} by at most $\sum_{i\in T\cap W^*} p_i - p'_i$, which is at most $\Delta$, i.e., $\sum_{i\in T} p'_i \geq \sum_{i\in T} p_i -\Delta$. As a result, we get:
\begin{align*}
v(W^*) - \sum_{i \in W^*} p'_i &= v(W^*) +\Delta - \sum_{i \in W^*} p_i \\
                        &\geq v(T) + \Delta - \sum_{i \in T} p_i\\
                        &\geq v(T) - \sum_{i \in T} p'_i,
\end{align*}
where the first inequality is due to~\eqref{ineq:opt_choice} and the second inequality holds because prices in $T$ drop by at most $\Delta$. Therefore, $W^*$ is also in $D(v;p')$ (the demand set for prices $p'$) and $D(v;p')\subseteq D(v;p)$ (the demand set for $p'$ contains only the sets of sellers that were in $D(v;p)$ and whose price dropped by exactly $\Delta$). Since our auctions use a tie-breaking rule that satisfies the IIA property and $W^*$ was selected from $D(v;p)$, it will also be selected from $D(v;p')$.
\end{proof}

We now use this lemma to analyze the subgame perfect equilibria of the game induced by Name-Your-BAFO, which can be computed using backwards induction. First, we focus on the bid of the last seller, given the prices posted by the $n-1$ bidders preceding it. Then, we focus on the bid of the next-to-last seller, and so on. 

For the last seller, choosing their optimal BAFO is relatively straightforward, since all the other bidders' prices are already finalized. The bidder chooses their bid $b_n$ aiming to maximize their utility $(b_n-c_n)x_n$, where $x_n = 1$ if the last bidder is in the winning set and $x_n=0$ otherwise. Note that, according to the definition of the auction, if the bidder wins, the price that they are paid is $p_n=b_n$, but the winning set $W(p)$ is chosen based on the final prices. Hence, the value of $b_n$ that maximizes the bidder's utility, depends on the previously posted bids $(b_1,...,b_{n-1})$.

To simplify our analysis for sellers that arrive earlier in the ordering, we now define a ``conditional price vector'' for each round of the Name-Your-BAFO auction.
\begin{definition}
For each round $k\in \{0, 1, \dots, n\}$ of the Name-Your-BAFO auction, the \emph{conditional price vector} is defined as:
\begin{equation*}
\hat{p}_i(k) = 
\begin{cases}
    b_i, & \text{if $i\leq k$}\\
    c_i, & \text{otherwise.}
\end{cases}
\end{equation*}
The final price vector after the completion of the auction is $p=\hat{p}_n$.
\end{definition} 

Now, using this price vector, we define a notion of ``conditional efficiency.'' The auction directly associates a winning set $W(s)$
with each terminal node $s$ of the game tree (the set of winners if that is indeed the outcome of the auction). Given a subgame perfect equilibrium, this allocation can also be extended to every internal node $s$, using the subgame perfect equilibrium outcome of the subgame rooted at $s$ (i.e., $W(s)$ is the set of winners in the subgame perfect equilibrium outcome of the subtree rooted at $s$).

\begin{definition} Consider a node $s$ at the $k$-th level of the Name-Your-BAFO game tree, after the first $k-1$ sellers have named their best and final offers $b_1, \hdots, b_{k-1}$. We say that the set $W(s)$ associated with $s$ is conditionally efficient if it maximizes the buyer's utility with respect to the price vector $\hat{p}(k-1)$, i.e., 
\begin{equation*}    
W(s) \in \arg\max_X v(X) - \sum_{i\in X} \hat{p}_i(k-1).
\end{equation*}
If there are multiple such allocations, $W(s)$ is the one chosen by the same tie-breaking rule used by the auction to determine the winners.
\end{definition}

We are now ready to prove Theorem~\ref{thm:NYBAFO-efficient}.
\begin{proof}[Proof of Theorem~\ref{thm:NYBAFO-efficient}] We prove that in any subgame perfect equilibrium, the set $W(s)$ associated with each node $s$ of the Name-Your-BAFO game tree is conditionally efficient. We start at level $k=n$ and proceed by backwards induction. This implies that the auction is also conditionally efficient at $k=1$, for which $\hat{p}_i=c_i$ for all $i\in N$, so the Name-Your-BAFO auction is efficient in every SPE. 

The rest of the proof verifies the conditional efficiency for all $k$.

\textbf{Base case ($k=n$):} Given any node $s$ at level $k=n$ of the game tree, we consider two possibilities based on whether or not there is a bid $b_n\geq c_n$ that bidder $n$ can make to become a winner given the previously posted bids $(b_1, \dots, b_{n-1})$. If no such bid exists, then bidder $n$ is not in $W(s)$ (i.e., they are not one of the winners in the SPE outcome of the subgame rooted at $s$) because for this bidder to become a winner they would have to report a bid $b_n <c_n$ that would return negative utility $b_n - c_n$. For every other bidder and, hence, also for every $i\in W(s)$, we have that their conditional price is the same as their final price, i.e., $\hat{p}_i(n-1)= p_i$, and thus $W(s)$ is conditionally efficient. If, on the other hand, there exists some bid $b_n\geq c_n$ that would make bidder $n$ a winner, then the bidder's optimal strategy is to report such a bid in order to be one of the winners (in fact, the bidder's optimal strategy is to report the largest $b_n$ for which they remain a winner). This means that that bidder $n$ will be a winner in the SPE of the subgame rooted at $s$, so this bidder is in $W(s)$. Furthermore, this implies that $W(s)$ is conditionally efficient in this case as well: since i) $W(s)$ maximizes the buyer's utility with respect to the final prices $p$ and ii) the prices $\hat{p}(n-1)$ satisfy $\hat{p}_i(n-1)\leq p_i$ for all $i\in W(s)$ and $\hat{p}_i(n-1)= p_i$ for all $i\notin W(s)$, we can use Lemma~\ref{lem:winner-choice} to conclude that $W(s)$ would also maximize the buyer's utility with respect to $\hat{p}(n-1)$.

\textbf{Induction step:} Now, consider any level $k<n$ and assume that for any subgame perfect equilibrium, any level $\ell \in \{k+1, \dots, n\}$, and any node $s'$ at level $\ell$, we have that $W(s')$ is conditionally efficient. We consider any node $s$ at level $k$ and we consider two possibilities based on whether or not there is a bid $b_k\geq c_k$ that bidder $k$ can make to become a winner in the corresponding child node $s'$ (the child node of $s$ that corresponds to choosing strategy $b_k$) given the previously posted bids $(b_1, \dots, b_{k-1})$. 

If no such bid exists, then bidder $k$ is not in $W(s)$ because for this bidder to become a winner they would have to report a bid $b_k <c_k$ that would return negative utility $b_k - c_k$. For every other bidder and, hence, also for every $i\in W(s)$, we have that their level-$k$ conditional price in $s$ and their level-($k+1$) conditional price in any child node of $s$ is the same (either equal to their cost or equal to their final price), and thus $W(s)$ is conditionally efficient. 

If, on the other hand, there exists some bid $b_k\geq c_k$ that would make bidder $k$ a winner, then the bidder's optimal strategy is to report such a bid in order to be one of the winners. This means that that bidder $k$ will be a winner in the SPE of the subgame rooted at $s$, so this bidder is in $W(s)$. Furthermore, this implies that $W(s)$ is conditionally efficient in this case as well. Let $s'$ be the child node of $s$ that bidder $k$ chooses in the SPE and note that i) $W(s)$ maximizes the buyer's utility with respect to the level-($k+1$) conditional prices at node $s'$ and ii) level-($k+1$) conditional price of bidder $k$ at node $s'$ is weakly greater than the level-$k$ conditional price of bidder $k$ at node $s$ (because the former is $b_k$ while the latter is equal to $c_k$, and $b_k\geq c_k$). All other prices are the same so, using Lemma~\ref{lem:winner-choice} we conclude that $W(s)$ would also maximize the buyer's utility with respect to $\hat{p}(k)$.
\end{proof}

\begin{remark}\label{remark:nyb-order-independent}
The proof of Theorem \ref{thm:NYBAFO-efficient} does not make any assumptions regarding the order in which the bidders are approached, so it holds for any possible, even adaptive, ordering. Also, the proof shows that the SPE strategy in each round depends only on which sellers already reported their bid and what that bid was. A remarkable property of this game, is that sellers can compute their strategies without knowing the order in which the remaining sellers will be approached.
\end{remark}

\begin{remark}
The proof of Theorem~\ref{thm:NYBAFO-efficient} actually provides a partial computation for equilibria.  First, for any $k$ such that $x^*_k=1$, where $x^*$ maximizes $v(x)-cx$, $k$ will choose the highest price for which  $\max_x v(x)-(p_1,\dots,p_{k-1},p_k,c_{k+1},\dots,c_n)x$ results in $x_k=1$.  When $x^*_k=0$, one choice that always results in an equilibrium is $p_k=c_k$, but there could be higher prices that also result in equilibria.
\end{remark}

\section{Descending Auctions with BAFO}\label{sec:descending}

While the Name-Your-BAFO auction has efficient equilibria, it may be a demanding auction format from the perspective of the sellers in practice. They interact with the auction only once and, in this one interaction, they need to choose one of infinitely many strategies: a bid $b_i \in \mathbb{N}$. As an alternative format, we now present a class of descending auctions where the sellers have repeated interactions with the auction as it gradually reduces the prices offered to each of them, and in each interaction a seller needs to choose between just two strategies: to either accept a price decrease or to permanently ``freeze'' their price. The option for a seller to freeze their price is a feature that is often used in practice, but has not received as much attention from an analytical standpoint. By freezing their price, the seller effectively communicates to the buyer that they would not accept any price reductions, so this is their best and final offer.

\paragraph{Descending auctions with BAFO}
The auction first initializes the price of each seller $i\in N$ to $p_i\gets h$ (where $h\in \mathbb{N}$ is some large value; e.g., larger than the costs of all sellers), and it initializes the set $F$ of ``frozen'' sellers to be empty. 
The auction then takes place over a sequence of rounds, and in each round it computes a (possibly empty) tentative allocation $W(p)$ based on the current prices, it chooses a seller $i \notin W(p)\cup F$ 
(i.e., one who has not frozen their price), and it provides them with two options: (i) accept a decrease of their price\footnote{Note that we assumed that all costs are natural numbers, using some small enough monetary denomination, like \$1 or \textcent1, so we can safely restrict our attention to price decrements of $1$. We could, alternatively, just choose a small enough decrement of $\epsilon$.} from $p_i$ to $p_i \gets p_i - 1$, or (ii) freeze their price at $p_i$. If $i$ freezes at $p_i$, this is their best and final offer (BAFO); the bidder is added to $F$ and the auction never attempts to lower their price again in the future. Each bidder's decision to accept a reduced price or freeze is observed by every other bidder, so both the current prices and the subset of bidders whose prices are frozen (i.e., the set $F$) are public knowledge. There are no restrictions on the order in which the sellers are approached. In particular, it can be adaptive, i.e., dependent of the sellers' choices along the way. The auction terminates after all sellers not in the tentative allocation $W(p)$ have frozen their prices. We assume that if a price reaches zero, the seller automatically freezes. Once all prices are finalized, the set of winners is determined to be the set that maximizes the buyer's utility at the given prices. If there are multiple such sets, the auction uses a tie-breaking rule that satisfies IIA.

\paragraph{Extensive form game tree} Each descending auction from the class above induces an extensive form game which can be represented as a tree, with each node corresponding to the interaction of the auction with some seller $i$, asking them to either accept a reduction of their price by $1$ or permanently freeze it. Note that in each round of this game one of the prices is either frozen or decreased by $1$ and no price can drop below zero, so the game is finite. For each node $s$ of this tree, we use $p(s)$ to denote the price vector at the time when the auction reaches this node $s$ and we use $F(s)$ to denote the set of sellers that have already chosen to freeze, i.e., they chose to freeze at some node on the path from the root to node $s$. Note that $p(s)$ and $F(s)$ are both fully determined by the path from the root to $s$, as the edges of this path determine when each price is decreased or frozen. If, at node $s$, the seller $i$ who was asked to reduce their price accepts this reduction, then we proceed to the left child-node of $s$, which we denote as $s_\ell$. If $i$ does not accept this reduction and instead chooses to freeze, then we proceed to the right child-node of $s$, which we denote by $s_r$.
Also, let $G(s)$ be the subgame that corresponds to the substree rooted at $s$.

Note that, since the sequence in which the sellers are approached by the descending auction can depend in non-trivial ways on the observed strategic choices, this can lead to an unpredictable trajectory for the price vector. At first glance, this suggests that the induced game would be very demanding for the sellers to play, but our key lemma (Lemma~\ref{lem:induct_eff}) shows that sellers do not need to know anything regarding the future price trajectory or details regarding the history to compute their optimal strategy. They only need access to $p(s)$ and $F(s)$.

\subsection{Efficiency of Descending Auctions w/ BAFO} 
Given a descending auction and some problem instance, consider any subgame perfect equilibrium of the game tree induced by this auction. We annotate each node $s$ in the game tree with the final allocation $W(s)$ resulting from playing the given subgame perfect equilibrium of the subtree rooted at $s$. 

\begin{definition}\label{def:cost}
For each node $s$ in the descending auction game tree, we define a price vector $\hat{p}(p(s), F(s))$, or just $\hat{p}(s)$, such that
\begin{equation*}
\hat{p}_i(s) = 
\begin{cases}
    p_i(s), & \text{if $i\in F(s)$}\\
    c_i, & \text{otherwise.}
\end{cases}
\end{equation*}
\end{definition} 

We show the following lemma using backwards induction: 
\begin{lemma}\label{lem:induct_eff}
For every node $s$ of the descending auction game tree, the allocation $W(s)$ resulting as the subgame-perfect equilibrium of $G(s)$ is an allocation that maximizes the buyer's utility with respect to cost vector $\hat{p}$, i.e., an allocation
\begin{equation*}
W(s)\in \arg\max_X v(X) - \sum_{i \in X} \hat{p}_i(s).
\end{equation*}
If there are multiple such allocations, $W(s)$ is chosen using the same tie-breaking rule that the auction uses to determine the set of winners.
\end{lemma}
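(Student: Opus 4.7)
The plan is to prove the lemma by backward induction on the nodes of the finite game tree, using Lemma~\ref{lem:winner-choice} and the IIA tie-breaking property as the main tools.

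\textbf{Base case (terminal $s$).} Here the auction outputs $W(s) = W(p(s))$. The termination rule guarantees every loser $j \notin W(s)$ is frozen, so $\hat{p}_j(s) = p_j(s)$, while each winner $j \in W(s)$ has $\hat{p}_j(s) = p_j(s)$ (if frozen) or $\hat{p}_j(s) = c_j$ (if not frozen). I would first establish, as a side invariant strengthened through the same induction, that in SPE any non-frozen winner at a terminal must satisfy $c_j \leq p_j(s)$: a seller would never rationally accept a decrement that forces them to win below cost, since freezing one step earlier would weakly dominate. Granted this invariant, Lemma~\ref{lem:winner-choice} applies to the transition from $p(s)$ to $\hat{p}(s)$ (only winners' prices weakly fall, losers' prices are unchanged), so the tie-breaking rule makes the same choice and the claim holds at $s$.

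\textbf{Inductive step (internal $s$, seller $i$ to move).} Let $s_\ell$ be the accept-child and $s_r$ the freeze-child. A direct computation gives $\hat{p}(s_\ell) = \hat{p}(s)$ (since $i$ remains non-frozen, $\hat{p}_i$ equals $c_i$ at both nodes, and every other coordinate coincides), while $\hat{p}(s_r)$ agrees with $\hat{p}(s)$ in every coordinate except that $\hat{p}_i$ rises from $c_i$ to $p_i(s)$. By the inductive hypothesis, $W(s_\ell)$ is the tie-broken optimum at $\hat{p}(s_\ell) = \hat{p}(s)$, and $W(s_r)$ is the tie-broken optimum at $\hat{p}(s_r)$. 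I then case-split on $i$'s SPE action. If $i$ accepts, then $W(s) = W(s_\ell)$ and we are immediately done by the IH. If $i$ freezes and $i \in W(s_r)$, then Lemma~\ref{lem:winner-choice} applied to the transition $\hat{p}(s_r) \to \hat{p}(s)$ (only the winner $i$'s price falls, from $p_i(s)$ down to $c_i$) gives that $W(s_r)$ remains tie-broken optimal at $\hat{p}(s)$. If $i$ freezes and $i \notin W(s_r)$, then SPE rationality forces $i \notin W(s_\ell)$ as well (otherwise accepting would strictly beat freezing); so for any set $X$ with $i \notin X$, the two price vectors $\hat{p}(s)$ and $\hat{p}(s_r)$ agree on the cost of $X$, and a short IIA argument together with Lemma~\ref{lem:winner-choice} yields that $W(s_r)$ is again the tie-broken optimum at $\hat{p}(s)$.

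\textbf{Main obstacle.} The subtlest part is the freeze-and-lose subcase, where seller $i$ is indifferent between accepting and freezing and one must reconcile the SPE-prescribed allocation with the auction's tie-breaking choice at $\hat{p}(s)$. This, together with carefully propagating the side invariant that non-frozen winners never carry prices strictly below their costs, is the real bookkeeping challenge; the remainder of the argument is a clean application of Lemma~\ref{lem:winner-choice} inside a standard backward-induction skeleton.
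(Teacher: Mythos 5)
Your overall skeleton --- backward induction over the game tree, with Lemma~\ref{lem:winner-choice} and the IIA tie-breaking property doing the work at each node --- is the same as the paper's, and your base case is if anything more careful than the paper's: the termination rule only forces the \emph{losers} to be frozen at a terminal node, and your side invariant that non-frozen winners never carry prices strictly below their costs is the right patch for applying Lemma~\ref{lem:winner-choice} there. Your accept branch and your freeze-and-win branch correspond to the paper's Cases~1/3B and Case~2, respectively, and are fine (both you and the paper implicitly use $p_i(s)\geq c_i$ when a seller freezes and wins).

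The genuine gap is in the freeze-and-lose branch. The claim that ``SPE rationality forces $i \notin W(s_\ell)$ as well (otherwise accepting would strictly beat freezing)'' is false: the inductive hypothesis only pins down the \emph{allocation} $W(s_\ell)$, not the price at which $i$ eventually wins in the subgame rooted at $s_\ell$, and that final (frozen) price can be strictly below $c_i$. In that event winning yields negative utility, so $i$ strictly prefers to freeze and lose (utility zero) even though $i\in W(s_\ell)$; this is exactly the configuration the paper isolates as its Case~3A, which your case split omits. The omission is not mere bookkeeping. In that configuration $\hat{p}(s)=\hat{p}(s_\ell)$, so the inductive hypothesis at $s_\ell$ says the tie-broken maximizer at $\hat{p}(s)$ is $W(s_\ell)$, which \emph{contains} $i$, while equilibrium play yields $W(s)=W(s_r)$, which does not; and your observation that $\hat{p}(s)$ and $\hat{p}(s_r)$ agree on every set excluding $i$ cannot close this, because sets containing $i$ are strictly cheaper at $\hat{p}(s)$ than at $\hat{p}(s_r)$ (the coordinate drops from $p_i(s)$ to $c_i$, a \emph{loser's} price decreasing, which is the direction in which Lemma~\ref{lem:winner-choice} gives nothing). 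So the subcase you yourself flag as ``the subtlest part'' is precisely where the argument breaks: you need either a dedicated treatment of the win-below-cost-at-$s_\ell$ situation, as the paper gives in its Case~3A, or a proof that it cannot arise under the inductive hypothesis.
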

\begin{proof}
To verify that the lemma holds for every terminal node of the game tree, we observe that at every terminal node $s$ all prices have been permanently frozen, so the vector $p$ of prices is finalized. The buyer's chosen allocation in response to these prices would be the allocation $X$ that maximizes $v(X)-\sum_{i\in X}p_i(s)$. Since every seller is frozen at $s$, i.e., $F(s)=N$, we have $\hat{p}_i(s)=p_i(s)$ for all $i$, so the chosen allocation also maximizes $v(X)-\sum_{i\in X}\hat{p}_i(s)$. If there are multiple such allocations, our auction is designed to consistently tie-break, so the lemma holds for all leaves. 

Now, consider an internal node $s$ of the game tree and assume that the lemma holds for all of its descendants. Let $i$ be the seller who, at node $s$, is asked to either accept a price decrease from $p_i(s)$ to $p_i(s)-1$ (leading to child-node $s_\ell$) or to freeze at $p_i(s)$ (leading to child-node $s_r$). By our inductive assumption, if $i$ accepts the price decrease, the resulting allocation will be 
\begin{equation}\label{alloc_n_ell}    
W(s_\ell)\in\arg\max_X v(X) - \sum_{i \in X} \hat{p}_i(s_\ell),
\end{equation}
and, if $i$ chooses to freeze, the resulting allocation will be
\begin{equation}\label{alloc_s_r}    
W(s_r)\in\arg\max_X v(X) - \sum_{i \in X} \hat{p}_i(s_r);
\end{equation}
in both cases any ties are broken using the same tie-breaking rule. 

Now, to identify seller $i$'s optimal strategy at node $s$ we consider four cases based on whether $i$ wins or loses in the aforementioned allocations $W(s_\ell)$ and $W(s_r)$:

\begin{itemize}[leftmargin=*]
\item \textbf{Case 1: Agent $i$ loses at $W(s_\ell)$.} Note that if this is the case, then seller $i$ also loses at $W(s_r)$. To verify this fact, note that $\hat{p}_i(s_\ell)=p_i(s)-1$ and $\hat{p}_i(s_r)=p_i(s)$, so $\hat{p}_i(s_r)>\hat{p}_i(s_\ell)$, while $\hat{p}_j(s_r)=\hat{p}_j(s_\ell)$ for all other sellers $j\neq i$. Therefore, for every winner $j\in W(s_\ell)$ we have $\hat{p}_j(s_r)=c_j(s_\ell)$ and every loser $j\notin W(s_\ell)$ we have $\hat{p}_j(s_r)\geq c_j(s_\ell)$. From the statement of the lemma we can conclude that $W(s_r)=W(s_\ell)$, which means that the allocation $W(s)$ resulting from playing the subgame-perfect equilibrium of $G(s)$ is independent of $i$'s strategic choice at node $s$ and, hence, $W(s)=W(s_\ell)=W(s_r)$. 

Since $W(s)=W(s_\ell)$, to prove that the lemma holds for node $s$ as well, it suffices to show that
\begin{equation}\label{case1}
W(s_\ell)\in \arg\max_X v(X) - \sum_{i \in X} \hat{p}_i(s),  
\end{equation}
and that this allocation would be chosen in case of ties. To verify that both of these are true, note that $\hat{p}_j(s)=\hat{p}_j(s_\ell)$ for every seller $j$, since no additional freezing took place between $s$ and $s_\ell$. Therefore, our inductive assumption that the lemma holds for node $s_\ell$ directly implies that the lemma also holds for $s$.

\item \textbf{Case 2: Agent $i$ wins at both $W(s_\ell)$ and $W(s_r)$.} If this is the case, then seller $i$ would choose to freeze, since they would win in both cases (due to the inductive hypothesis), but the price that $i$ would receive if they freeze is higher (it will be exactly $p_i(s)$ if they freeze, while it would be at most $p_i(s)-1$ if they do not). Therefore, given this strategic choice of $i$, the resulting allocation $W(s)$ of $G(s)$ will be the same as $W(s_r)$. 

Since $W(s)=W(s_r)$, to prove that the lemma holds for $s$ as well, it suffices to show that
\begin{equation}\label{case2}
W(s_r)\in \arg\max_X v(X) - \sum_{i \in X} \hat{p}_i(s),  
\end{equation}
and that $\tau$ would choose this allocation in case of ties. We can verify that~\eqref{case2} holds by using the fact that $W(s_r)$ satisfies condition \eqref{alloc_s_r} and then observing that the only difference between $\hat{p}(s_r)$ and $\hat{p}(s)$ is the fact that $\hat{p}_i(s_r)\geq c_i(s)$ because in $s_r$ seller $i$ froze at a price at least $c_i$. Therefore, for each set that does not contain $i$ their ``cost'' relative to $\hat{p}$ is the same between $s$ and $s_r$, while the ``cost'' relative to $\hat{p}$ of all sets that include $i$ dropped by the same amount. The fact that $i\in W(s_r)$ implies that $W(s_r)$ satisfies~\eqref{case2}. To also verify that $\tau$ would choose $W(s_r)$ in case of ties with respect to $\hat{p}(s)$, note that $c_j(s_r)\leq c_j(s)$ for all $j\in W(s_r)$ and $c_j(s_r)\geq c_j(s)$ for all $j\notin W(s_r)$, so the fact that $\tau$ chose $W(s_r)$ given $\hat{p}(s_r)$ (by our inductive assumption) implies that $\tau$ would also choose $W(s_r)$ given $\hat{p}(s)$ (by definition of the tie-breaking rule). This implies that the lemma holds also for $s$.

\item \textbf{Case 3A: Agent $i$ wins at $W(s_\ell)$ for a price less than $c_i$ and loses at $W(s_r)$.} In this case, if seller $i$ accepted the price decrease they would end up winning, but for a price that is strictly lower than their cost, leading to negative utility. They would instead prefer to freeze at price $p_i(s)$ and lose in order to maintain a non-negative utility, so the resulting allocation is $W(s)=W(s_r)$.

Since $W(s)=W(s_r)$, to prove that the lemma holds for $s$ as well, it suffices to show that
\begin{equation*}
W(s_r)\in \arg\max_X v(X) - \sum_{i \in X} \hat{p}_i(s),  
\end{equation*}
and that the tie-breaking rule would choose this allocation in case of ties. We can once again verify that this is true using the fact that $W(s_r)$ satisfies condition \eqref{alloc_s_r}, combined with the facts that $\hat{p}_i(s_r)\geq \hat{p}_i(s)$ for seller $i$ who is not in $W(s)$, while $\hat{p}_j(s_r)=\hat{p}_j(s)$ for all other sellers $j\neq i$.
\end{itemize}

\item \textbf{Case 3B: Agent $i$ wins at $W(s_\ell)$ for a price of at least $c_i$ and loses at $W(s_r)$.} In this case, seller $i$ prefers the outcome of winning at $W(s_\ell)$ for a price that would give them non-negative utility rather than losing at $W(s_r)$, which would give them zero utility. As a result, they would accept the price decrease and $W(s)=W(s_\ell)$. 

Since $W(s)=W(s_\ell)$, to prove that the lemma holds for $s$ as well, it suffices to show that
\begin{equation*}
W(s_\ell)\in \arg\max_X v(X) - \sum_{i \in X} \hat{p}_i(s),  
\end{equation*}
and that $\tau$ would choose this allocation in case of ties. We can once again verify this is true using the fact that $\hat{p}_j(s)=\hat{p}_j(s_\ell)$ for every seller $j$, since no additional freezing took place between $s$ and $s_\ell$. Therefore, our inductive assumption that the lemma holds for node $s_\ell$ directly implies that it also holds for $s$.
\end{proof}

Using Lemma~\ref{lem:induct_eff}, we can now verify that the descending auction is guaranteed to be efficient in any subgame-perfect equilibrium. 
\begin{theorem}\label{thm:efficient-descending-bafo}
The allocation induced by any descending auction with BAFO in any subgame perfect equilibrium is always efficient.
\end{theorem}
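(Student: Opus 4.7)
The plan is to derive the theorem as an immediate corollary of Lemma~\ref{lem:induct_eff}, applied to the root of the extensive form game tree induced by the descending auction. Essentially, all the work has already been done: the lemma's induction over the tree produces a statement that, at the root, reduces exactly to efficiency with respect to the true costs.

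Specifically, let $s_0$ denote the root of the game tree. At the start of the auction, no seller has yet frozen their price, so by definition $F(s_0) = \emptyset$. Consulting Definition~\ref{def:cost}, this means $\hat{p}_i(s_0) = c_i$ for every seller $i \in N$, since the ``otherwise'' branch of the case analysis applies to every seller. Invoking Lemma~\ref{lem:induct_eff} at $s_0$, the allocation $W(s_0)$ returned by the subgame perfect equilibrium of the entire game satisfies
\begin{equation*}
W(s_0) \in \arg\max_X v(X) - \sum_{i \in X} \hat{p}_i(s_0) = \arg\max_X v(X) - \sum_{i \in X} c_i,
\end{equation*}
which is precisely the definition of an efficient (social-welfare-maximizing) allocation $W^*$. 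If multiple such allocations exist, the tie-breaking rule selects one of them consistently, as guaranteed by the lemma.

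Because the conclusion of Lemma~\ref{lem:induct_eff} was proven uniformly for every node of the tree, no additional argument is needed at the root; the only observation required is the characterization of $\hat{p}(s_0)$ as the cost vector. There is no real obstacle here, as the genuine difficulty of the result was encapsulated in the backward-induction argument of Lemma~\ref{lem:induct_eff}, which handled the nontrivial case analysis of how a seller's freeze-or-accept choice propagates conditional efficiency from children to parents. The proof of the theorem is therefore a one-line instantiation of that lemma at $s_0$.
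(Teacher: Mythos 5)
Your proof is correct and takes exactly the same route as the paper: both instantiate Lemma~\ref{lem:induct_eff} at the root node, observe that $F(\text{root})=\emptyset$ forces $\hat{p}_i=c_i$ for all $i$, and conclude that the SPE allocation maximizes $v(X)-\sum_{i\in X}c_i$. No differences worth noting.
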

\begin{proof}
Let $s$ be the root node of the game tree. Since no seller has had a chance to freeze at that point, i.e., $F(s)=\emptyset$, we have $\hat{p}_i=c_i$ for all $i$. Using Lemma~\ref{lem:induct_eff} for $s$ we can conclude that the allocation $W(s)$ resulting from a subgame-perfect equilibrium satisfies
\begin{equation*}    
W(s_\ell)\in\arg\max_X v(X) - \sum_{i \in X} c_i(s),
\end{equation*}
which implies that it is an efficient allocation.
\end{proof}

\begin{remark} Similar to Remark~\ref{remark:nyb-order-independent}, the proof Theorem \ref{thm:efficient-descending-bafo} shows that sellers do not need to know the order in which the buyer will approach the sellers to compute their SPE strategy.
\end{remark}

\begin{remark} If the valuation function satisfies gross substitutes, then the final allocation never includes a seller that froze their price. By the definition of substitutability, if an item is not demanded at a given price vector, it is not demanded at any vector where every other price is weakly smaller. Hence, for the special case of substitutes, our auction behaves exactly like the procedure of Kelso and Crawford \cite{KC82}.
\end{remark}

\section{The Cost of Descending Auctions with BAFO}\label{sec:prices}
Having shown that the subgame perfect equilibria of any descending auction with BAFO always yield efficient allocations, we now focus on the prices vectors that they induce. Our next result shows that these price vectors are not unique; not even with respect to the buyer's total cost (the sum of the winners' prices). In fact, we show that even for the special class of anonymous valuations, which depend only on the number of sellers rather than who these sellers are, the total cost of the buyer can vary by a factor that grows linearly with the number of sellers.

\begin{theorem}\label{thm:cost}
The total cost of the buyer in SPE outcomes of (slightly) different descending auctions with BAFO over the same instance can vary by a factor $n/2$, even for the special case where the valuation function of the buyer is anonymous. On the other hand, if the anonymous valuation function is also weakly concave, then the price vector in every SPE is unique (a threshold price for all winners).
\end{theorem}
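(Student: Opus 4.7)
My plan is to address the two parts of the theorem separately, using Lemma~\ref{lem:induct_eff} as the main tool in both.

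For the cost-variation claim (part~1), I would construct a concrete instance together with two descending auctions whose SPE outcomes differ in buyer cost by a factor of $n/2$. A natural candidate instance takes $n$ sellers, with one ``cheap'' seller of cost $0$ and $n-1$ ``expensive'' sellers each of cost $1$, equipped with the anonymous unit-demand valuation $v(Q) = n/2$ for $|Q|\geq 1$ and $v(\emptyset) = 0$. The efficient allocation purchases only from the cheap seller, with welfare $n/2$. Auction~A approaches the cheap seller first: a Bertrand-style argument shows that the cheap seller must freeze at price~$1$, because any higher frozen price would be undercut by an expensive seller (who, at price equal to cost, prefers winning to losing by the paper's tie-breaking assumption), so the total cost is $1$. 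Auction~B approaches the cheap seller last: each expensive seller is indifferent across all freezing prices (the cheap seller will undercut them no matter what), and I would exhibit an SPE in which each expensive seller freezes immediately at the high starting price $h$, after which the cheap seller freezes at $n/2$, giving total cost $n/2$. The ratio of the two costs is $n/2$, and both strategy profiles can be verified to be SPE by a direct application of Lemma~\ref{lem:induct_eff}.

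For the uniqueness claim (part~2), assume $v$ is anonymous and weakly concave, i.e., the marginals $m_k := v(k)-v(k-1)$ are non-increasing in $k$. Sort the costs $c_{(1)}\leq c_{(2)}\leq \dots \leq c_{(n)}$ and let $k^*$ be the largest $k$ with $m_k \geq c_{(k)}$; by Theorem~\ref{thm:efficient-descending-bafo} the winners in every SPE are the $k^*$ cheapest sellers. Define the threshold
\[
\tau \;=\; \min\bigl(m_{k^*},\;c_{(k^*+1)}\bigr),
\]
with the convention $c_{(n+1)} = +\infty$. I would show that in every SPE of every descending auction every winner freezes at exactly~$\tau$. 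The upper bound $p_i \leq \tau$ follows directly from Lemma~\ref{lem:induct_eff}: a price above $m_{k^*}$ would cause the buyer's preferred allocation to drop the winner in favor of a size-$(k^*-1)$ allocation, and a price above $c_{(k^*+1)}$ would cause the buyer to swap them for the $(k^*+1)$-th cheapest seller (who, by the ``prefer winning at cost'' assumption, is willing to join at price $c_{(k^*+1)}$). The lower bound $p_i \geq \tau$ proceeds by backward induction on the game tree: at the round in which a winner's price first reaches $\tau$, weak concavity ensures that freezing keeps them in the buyer's preferred allocation, while accepting a further decrease either strictly lowers their winning utility or triggers a competitor's subsequent commitment that displaces them.

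The main obstacle will be the lower bound in part~2: making the backward induction airtight across every adaptive approach rule, and ruling out the complementarity-driven multiplicity of SPE exploited in the part~1 example. Weak concavity is essential here because it keeps the buyer's demand sufficiently monotone, and the anonymous structure reduces the analysis to a single representative winner, since all $k^*$ winners are interchangeable from the buyer's perspective.
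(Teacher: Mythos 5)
Your part~1 construction is genuinely different from the paper's and works on its own terms: the paper uses an anonymous valuation with a kink (marginals $1,\dots,1,0,1$), all costs zero, and two runs differing only in the starting price $h$, so the cost variation is driven by complementarity; you use a unit-demand valuation with asymmetric costs and vary the approach order, and both the cost-$1$ SPE of Auction~A and the cost-$n/2$ SPE of Auction~B check out via Lemma~\ref{lem:induct_eff}. The fatal problem is that your instance is itself anonymous and \emph{weakly concave}: the marginals of $v(k)=n/2$ for $k\ge 1$ are $n/2,0,\dots,0$, which are non-increasing. So your own example, in which the winner's SPE price can be either $1$ or $n/2$, is a direct counterexample to the uniqueness you assert in part~2 (your threshold evaluates to $\tau=\min(n/2,\,1)=1$, yet Auction~B has an SPE at price $n/2$). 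The paper deliberately avoids this trap: its part~1 valuation is explicitly \emph{not} concave (this is even remarked upon in the conclusion), precisely so that the two halves of the theorem do not collide.

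The gap in part~2 is exactly the mechanism your Auction~B exploits. Your upper bound $p_i\le c_{(k^*+1)}$ presumes that, when a winner freezes, the $(k^*{+}1)$-th cheapest seller still has $\hat p_j=c_{(k^*+1)}$ and so disciplines the winner's price. But a losing seller approached early may freeze immediately at the high starting price $h$ --- they are indifferent, since by Lemma~\ref{lem:induct_eff} they lose in every continuation --- after which $\hat p_j=h$ and the swap you invoke is never attractive to the buyer; the winner can then freeze as high as $m_{k^*}$. Hence $p_i\le\tau$ is simply false in some SPEs, and no amount of work on the lower-bound induction (which you already flag as the main obstacle) can repair the claim as stated. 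For reference, the paper's own part~2 argument pins all winners at $v(k{+}1)-v(k)$ and likewise relies on losers remaining priced at cost; a correct uniqueness statement would need either to rule out early high-price freezes by losers or to restrict the auction format so they cannot occur.
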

\begin{proof} 
Consider an instance where the buyer's value for a set of sellers $Q\subseteq N$ is $v(Q)=|Q|$ if $|Q|\leq n-2$, but if $|Q|=n-1$ it is $v(Q)=n-2$, and if $|Q|=n$ it is $v(Q)=n-1$. In this instance, the cost of every seller is zero. Therefore, the allocation $x^*$ that maximizes the social welfare buys from all the sellers, leading to social welfare equal to $n-1$. As our previous results show, every SPE of the Name-Your-BAFO auction will return this allocation but, as we show here, the resulting prices can very by a lot.

Using the instance above, we consider two executions of the same descending auction with BAFO that only differ with respect to the initial prices. In the first execution, the prices are all initialized at $h=1$, while in the second one they are initialized at $h=1/2$ (which is still much higher than the seller costs). Although one's first guess may be that the former is the one that yields the higher cost for the buyer, somewhat surprisingly, we show that the opposite holds, i.e., starting from lower prices leads to a much higher total cost.

We first consider the execution of the auction that initializes all prices at $h=1$, and we claim that the first bidder who is approached by the auction will choose to directly freeze their price at $1$, and every subsequent bidder will not freeze their price until it reaches $0$, leading to a total cost of $1$ and utility $n-2$ for the buyer in $x^*$. To verify that this is a SPE, assume that one of the subsequent bidders does indeed freeze their price before reaching zero, even though they have observed the fact that the first bidder froze their price at $1$. Let $i$ be the first bidder to do so and let $p_i>0$ be the price at which they freeze. If this happens, note that the total final cost of these two frozen bidders will be $1+p_i>1$, which leads to a contradiction, because it implies that the outcome of the auction will not be $x^*$: dropping these two agents from $x^*$ reduces the total value by $1$ (from $n-1$ to $n-2$) but it reduces the cost by $1+p_i>1$, leading to a higher utility for the buyer. Therefore, if the first bidder freezes at $1$, all subsequent bidders will freeze at $0$. Finally, to verify that it is a SPE strategy for the first bidder to freeze at $1$, note that this is the highest payment that they could hope to receive in this auction and they actually receive it in this outcome.

We now consider the execution of the auction that initializes all prices at $h=1/2$, and we claim that the every bidder who is approached by the auction will choose to directly freeze their price at $1/2$, leading to a total cost of $n/2$ in $x^*$. Note that $x^*$ does maximize the utility of the buyer, providing him with $n/2-1$ (a value of $n-1$ and a total payment of $n/2$), while the utility of any other subset would be at most $n/2-1$. Verifying that this is an SPE is rather straightforward, since every bidder receives the maximum payment that they could hope to receive in this auction.

For the special case of concave valuation functions, note that if the valuation function is concave, then the optimal solution $x^*$ can be derived as follows. Order the sellers in a weakly increasing order of their cost and rename them so that $c_i$ is the $i$-th smallest cost. Then, if we let $v(k)-v(k-1)$ denote the marginal change in the buyer’s value after adding a $k$-th seller to a set of $k-1$ sellers, the optimal solution corresponds to the prefix of sellers in the aforementioned ordering for which $v(k)-v(k-1) \geq c_k$, i.e., their marginal contribution to the buyer’s value is at least as high as their cost. In any SPE of a descending auction with BAFO, every winning seller, i.e., the first k sellers in the ordering, freezes their price at $v(k+1)-v(k)$. It is easy to verify that these prices correspond to a SPE, so we now show that no other price vector could arise in a SPE. Consider any other price vector and note that if one of the smallest $k$ prices, i.e., a price of one of the $k$ winning sellers of $x^*$, is higher than $v(k+1)-v(k)$, then dropping that seller from the winning set would increase the buyer’s utility. On the other hand, if one of these $k$ prices is lower than $v(k+1)-v(k)$, then the corresponding seller could freeze their price sooner while remaining in the winning set.
\end{proof}

\section{Conclusion and Future Directions}
Our results provide a clear positive message regarding the efficiency of the sequential procurement auctions that we propose, and the significant value of providing the bidders with a BAFO strategy. 

A direction for future research that seems intriguing is to develop a deeper understanding regarding the price vectors that can arise with each one of these auctions, and what total cost that they lead to. For instance, is it possible to provide upper bounds on how much the cost of the buyer can vary when the valuation function is submodular? (note that although the construction of Theorem~\ref{thm:cost} uses symmetric functions, these functions are not concave) Alternatively, whenever this cost can vary significantly, how should the buyer choose the auction aiming to minimize the realized cost?

Another interesting direction would be to better understand what information the sellers need to have regarding each other's cost in order to reach (approximately) efficient outcomes.

\bibliographystyle{plainnat}
\bibliography{refs}
\end{document}